\documentclass{ifacconf}
\usepackage{amsmath}
\usepackage{amssymb}
\usepackage{mathabx}
\usepackage{tikz}
\usepackage[bb=boondox]{mathalfa}
\usepackage{comment}
\usepackage{xspace}
\usepackage{subfigure}
\usepackage{graphicx}      
\usepackage{natbib}        
\usetikzlibrary{arrows,automata,positioning,fit,calc,shapes}
\usepackage{amsfonts,amssymb, bm, bbm, subfigure}

\newcommand{\ba}{\begin{array}}
\newcommand{\ea}{\end{array}}

\newcommand{\be}{\begin{equation}}
\newcommand{\ee}{\end{equation}}

\newcommand{\mc}{\mathcal}

\newcommand{\1}{\mathbbm{1}}
\newcommand{\0}{\mathbb{0}}

\newcommand{\R}{\mathbb{R}}

\renewcommand{\P}{\mathbb{P}}
\def\x{\mathbf{x}}
\def\y{\mathbf{y}}

\DeclareMathOperator*{\argmax}{argmax}
\DeclareMathOperator*{\argmin}{argmin}

\def\qed{\hfill \vrule height 7pt width 7pt depth 0pt\medskip}
\def\beq{\begin{equation}}
\def\eeq{\end{equation}}

\newtheorem{theorem}{Theorem}
\newtheorem{definition}[theorem]{Definition}
\newtheorem{proposition}[theorem]{Proposition}
\newtheorem{lemma}[theorem]{Lemma}
\newtheorem{corollary}[theorem]{Corollary}
\theoremstyle{remark}
\newtheorem{remark}[theorem]{Remark}

\begin{document}
\begin{frontmatter}

\title{Controlling network coordination games}

\thanks[footnoteinfo]{Giacomo Como is also with the Department of Automatic Control, Lund University, Sweden. This work was partially supported by MIUR grant Dipartimenti di Eccellenza 2018--2022 [CUP: E11G18000350001], the Swedish Research Council, and by the Compagnia di San Paolo.}

\author[Second]{St\'ephane Durand} 
\author[Second]{Giacomo Como}
\author[Second]{Fabio Fagnani}

\address[Second]{Department of Mathematical Sciences ``G.L.~Lagrange'', Politecnico di Torino, Corso Duca degli Abruzzi 24, 10129 Torino, Italy\\ 
(e-mail: \{stephane.durand, giacomo.como, fabio.fagnani\}@polito.it).}

\begin{abstract} We study a novel control problem in the context of network coordination games: the individuation of the smallest set of players capable of driving the system, globally, from one Nash equilibrium to another one. Our main contribution is the design of a randomized algorithm based on a time-reversible Markov chain with provable convergence garantees.
 
\end{abstract}

\begin{keyword} Coordination games, Contagion, Control set, Resilience, Randomized search
\end{keyword}

\end{frontmatter}

\section{Introduction}
In a binary ($0-1$) coordination game over a graph, where initially all agents are in the Nash equilibrium $0$, what is the minimum number of agents that if forced to $1$ will push the system to converge to the Nash equilibrium of all $1$'s under a best response dynamics? This paper is devoted to the analysis of this problem and to the design of an algorithm for an efficient solution. 

The problem considered can be framed in the more general setting of studying minimal interventions strategies needed to make a multi-agent system governed by agents' myopic utility maximization, to drive from a Nash equilibrium to a desired another one. In game theory, typically, interventions have been modeled as perturbations of the utility functions (e.g. taxes and prices in economic models or tolls in transportation systems). Here we instead take a different viewpoint: that of individuating a subset of nodes (hopefully small) that if suitably controlled will lead the entire system to the desired configuration. The minimum cardinality of this set can also be interpreted as a measure of resilience of the system: the larger it is, the more difficult is for an external shock to destabilize it. 

The problem of determining the best set of nodes to exert the most effective control in a networked system has recently appeared in other contexts: for instance in \cite{Yildiz-2013} and \cite{Vassio-2014} authors study the problem of the optimal position of stubborn influencers in linear opinion dynamics.

Binary coordination games have received a great attention in the recent years as one of the basic models for games with strategic complementarities \cite{gamesonnetworks}. Its variegate applications include modeling of social and economic behaviors like adopt a new technology, participate in an event, provide a public good effort.

This game is analyzed in detail in \cite{morris} where the key concept of cohesiveness of a set of players is introduced and then used in characterizing all NE's. 
Moreover, the question if an initial seed of influenced players (that maintain action  $1$ in all circumstances) is capable of propagating to the all network is addressed in the same paper and an equivalent characterization of  this spreading phenomenon is also expressed in terms of cohesiveness. This contagion phenomenon is exactly what we want to analyze: subset of nodes from which propagation is successful will be called sufficient control sets and our goal is to find such sets of minimum possible cardinality. 

The condition proposed in \cite{morris} is computationally quite demanding and can not be used to directly solve our optimization problem. Indeed, even to determine if a single set  is a sufficient control set,  it requires a number of check growing exponentially in the cardinality of the complement of such set. 

The complementary problem of understanding, given an integer $k$, what is the maximum possible spreading of the state $1$, starting from an initial seed of $k$ influenced players, was studied in a seminal paper by \cite{Kempe}. While their problem and ours are related, they are independent, in the sense that solving one does not provide a solution of the other. Another point worth stressing is that, in their setting, \cite{Kempe} consider agents equipped with random independent activation thresholds and take as functional to be optimized the average size of the maximum spreading. They prove that such functional is sub-modular and then they design a greedy algorithm for obtaining suboptimal solutions. The randomness that they introduce is actually crucial in their approach, as the functional considered would not be sub-modular for deterministic choices of thresholds. 

In this paper we consider a scenario when all agents have a fixed threshold $1/2$, thus not covered in \cite{Kempe}, and we design an iterative search randomized algorithm with provable properties of convergence towards sufficient control sets of minimum cardinality. The core of the algorithm is a time-reversible Markov chains over the family of all sufficient control sets that starts with the full set, moves through all of them in an ergodic way, and concentrates its mass on those of minimum cardinality .

We conclude this introduction with a brief outline of the paper. In the final part of this section we report some basic notation used throughout the paper. Section \ref{sec:model} is dedicated to the formal introduction of the problem. The main technical parts are Sections \ref{sec:monotone} and \ref{sec:MC}. In Section \ref{sec:monotone}, we introduce the important notion of monotone crusade (appeared for other purposes in \cite{Drakopoulos.ea:2015,Drakopoulos.ea:2016}) and we give an equivalent (but more operative) characterization of sufficient control sets. In Section  \ref{sec:MC} we introduce a family of reversible Markov chains whose invariant probability is proven to concentrate on the optimal sufficient control sets. Section \ref{sec:simulations} describes the algorithm, based on the Markov chains introduced in the previous section, and presents some simulation results. Finally, Section \ref{sec:conclusions} ends the paper.

\subsection{Notation} Vectors are indicated in bold-face letters ${\bf x},\, {\bf y}, {\bf z}$. For $\mathbf{x},\, \mathbf{y}$ are two vectors of the same dimension, the notation $\mathbf{x} \le \mathbf{y}$ indicates that $\mathbf{x}$ is lower or equal \emph{component-wise} than $\mathbf{y}$.
We define as usual the binary vectors $\delta_i$: $(\delta_i)_i=1$ and $(\delta_i)_j=0$ for every $j\neq i$.
If $\mc S\subseteq\{1,\dots , n\}$, we put $\1_{\mc S}=\sum_{i\in\mc S}\delta_i$. 
Every ${\bf x}\in \{0,1\}^n$ can be written as ${\bf x}=\1_S$ for some $S\subseteq \{1,\dots , n\}$.  We call such a subset $S$ the \emph{support} of ${\bf x}$ and we denote it $S_{\bf x}$. 
We use the notation $\0$ and $\1$ to denote, respectively, the vector of all $0$'s and the vector of all $1$'s of any possible dimension.

{
\section{Controlled majority dynamics}\label{sec:model}
We consider a set of players $\mc V=\{1,\dots , n\}$ connected by a simple undirected graph $\mc G=(\mc V, \mc E)$ (e.g. $\mc E\subseteq\mc V\times\mc V$ is such that $(i,i)\not\in\mc E$ for any $i$ and $(i,j)\in\mc E$ iff $(j,i)\in\mc E$). The binary set $\mc A=\{0,1\}$ is the set of possible actions for all players. We put $\mc X=\mc A^n$ and we define
the \emph{majority game} on $\mc G$ as the game where each player $i\in \mc V$ has utility $\lambda^{\text c}_i:\mc X\to\R$ given by
%
%
$$\lambda^{\text c}_i({\bf x})=|\{j\in N_i\,|\, x_j=x_i\}|$$
where $N_i$ is the neighborhood of agent $i$ in $\mc G$. In other words, $\lambda^{\text c}_i({\bf x})$ is 
the number of neighbors of $i$ with which $i$ is in agreement.

As usual in game theory, given a configuration vector  $\x\in\mc X$ and a player $i$, we indicate with $\x_{-i}$ the configuration restricted to all players but $i$ and we consequently write 
$\x=(x_i, \x_{-i})$. 

Best response sets are defined by 
$$
\mc B^{\text c}_i(\bf x)=\argmax\limits_{\alpha\in\mc A} \lambda^{\text c}_i(\alpha, \x_{-i})
$$

Using the notation
\beq\label{n01}n_{i,a}(\mathbf{x})=|\{j\in N_i\,|\, x_j=\alpha\}|,\quad \alpha=0,1\eeq
to indicate the number of neighbors of an agent  $i$ playing action $\alpha$  in the configuration $\mathbf{x}$,
%
%
%
Best response sets can be more explicitly described as
$$
\mc B^{\text c}_i(\bf x)=\left\{\begin{array}{ll} \{0\},\quad&{\rm if}\, n_{i,0}(\mathbf{x})>n_{i,1}(\mathbf{x})\\
\{0,1\},\quad&{\rm if}\, n_{i,0}(\mathbf{x})=n_{i,1}(\mathbf{x})\\
\{1\},\quad&{\rm if}\, n_{i,0}(\mathbf{x})<n_{i,1}(\mathbf{x})
\end{array}\right.$$

%
%

$\mc N$ denotes the set of Nash equilibria: $\mc N=\{{\bf x}\in\mc X\,|\, x_i\in \mc B^{\text c}_i(\x)\,\forall i\in\mc V\}$. This set depends on the topology of the graph $\mc G$, note however that $\0$ and $\1$ are always Nash equilibria.

%
%

%
%
%
It is well known that this game is potential with a potential $\Phi_{\text c}$ given by
\beq \Phi_{\text c}({\bf x})=|\{(i,j)\in\mc E,|\, x_j=x_i\}|\eeq
This simply says that, for every configuration $\x\in\mc X$, action $\alpha\in\mc A$ and player $i\in\mc V$, it holds
\beq\label{potential} \Phi_{\text c}(\alpha, {\bf x}_{-i})- \Phi_{\text c}({\bf x})=\lambda^{\text c}_i(\alpha, {\bf x}_{-i})-\lambda^{\text c}_i({\bf x})\eeq

The (asyncronous) best response dynamics is a discrete time Markov chain (MC) $X_t$ on the configuration space $\mc X$ where, at every time $t$, a player $i$ is chosen 
uniformly at random and it modifies its action choosing an element uniformly at random within $\mc B^{\text c}_i((X_t)_{-i})$. Denote with $P_{\x,\y}$ the transition matrix (on $\mc X\times\mc X$) of the MC $X_t$ and note that, given $\x\in\mc X$ and $\y=(\alpha, {\bf x}_{-i})$, it holds that
\beq\label{BRtransition} P_{\x,\y}>0\;\Leftrightarrow\;  \Phi_{\text c}(\alpha, {\bf x}_{-i})\geq  \Phi_{\text c}({\bf x})\eeq
From the fact that the potential is not decreasing along the trajectories of $X_t$, it follows the classical result that, with probability $1$, $X_t$ converges in finite time to the set $\mc N$ of Nash equilibria.

The question we pose is: what is the minimal number of agents that if forced to $1$ will ensure almost surely that the best response dynamics reach the state $\1$.

Given a subset $\mc C\subseteq \mc V$, we indicate with $X^{\mc C}_t$ the Markov chain where only the agents in $\mc V\setminus\mc C$ update their action according to the best response rule defined above, while agents in $\mc C$ maintain action $1$. This new MC  takes values in the subset of configurations 
$$\mc X^{(\mc C)}=\{x\in\mc X\,|\, x_i=1\,\forall i\in\mc C\}$$

This restricted game remains potential. This new dynamics will converge too to its set of Nash equilibria $\mc N^{\mc C}=\mc N\cap\mc X^{(\mc C)}$. 

The following definition is the main object of study of this paper. 

\begin{definition}[Sufficient control set]\label{def:sufficient} $\mc C$ is a \emph{sufficient control set} if 
\begin{equation}
\forall x_0\in\mc X^{(\mc C)}, \; 
\P(\exists t\,:\, X^{\mc C}_t=\1,|\, X^{\mc C}_0=x_0)=1
\label{convergence}
\end{equation}
A sufficient control set is \emph{minimal} if none of its strict subsets is a sufficient control set. \\
A sufficient control set is \emph{optimal} if there exists no sufficient control set of strictly smaller cardinality.
\end{definition}

Our objective is to find optimal sufficient control sets. 

To give a more intuitive idea of what control sets resemble, here are a few illustrative examples.

\begin{figure}[h]
\centering
\begin{tikzpicture}

\draw[thick] (-1.3,3) -- (0,4);
\draw[thick] (1.3,3) -- (0,4);
\draw[thick] (-1.8,2) -- (-1.3,3);
\draw[thick] (-1.8,1) -- (-1.8,2);
\draw[thick] (-.8,2) --(-1.3,3);
\draw[thick] (.8,2) --(1.3,3);
\draw[thick] (1.8,2) --(1.3,3);
\draw[thick] (0.3,1) --(.8,2);
\draw[thick] (1.3,1) --(.8,2);

\draw[fill,white] (0,4) circle (.13);
\draw[thick] (0,4) circle (.13);

\draw[fill,red] (-1.3,3) circle (.13);
\draw[thick] (-1.3,3) circle (.13);

\draw[fill, white] (1.3,3) circle (.13);
\draw[thick] (1.3,3) circle (.13);

\draw[fill,white] (-1.8,2) circle (.13);
\draw[thick] (-1.8,2) circle (.13);

\draw[fill,green] (-1.8,1) circle (.13);
\draw[thick] (-1.8,1) circle (.13);

\draw[fill,green] (-.8,2) circle (.13);
\draw[thick] (-.8,2) circle (.13);

\draw[fill,red] (.8,2) circle (.13);
\draw[thick] (.8,2) circle (.13);

\draw[fill,green] (1.8,2) circle (.13);
\draw[thick] (1.8,2) circle (.13);

\draw[fill,green] (0.3,1) circle (.13);
\draw[thick] (0.3,1) circle (.13);

\draw[fill,green] (1.3,1) circle (.13);
\draw[thick] (1.3,1) circle (.13);

\end{tikzpicture}
\caption{Example : on trees, the set of the leaves (in green) is always a valid control set.In red you can see another  valid control set, of size 2, minimal for this particular tree }
\end{figure}
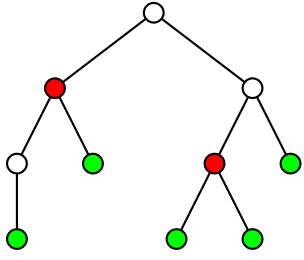

\begin{figure}[h]
\centering
\begin{tikzpicture}
\foreach \i in {90,162,...,450}
{
	\draw (\i+144:1)--(\i:1) -- (\i+72:1);
}
\foreach \i in {90,162,...,450}
{
	\draw[white,fill] (\i:1) circle (.13);
	\draw[thick] (\i:1) circle (.13);
	
}
	\draw[red,fill] (162:1) circle (.13);
	\draw[thick] (162:1) circle (.13);

	\draw[red,fill] (18:1) circle (.13);
	\draw[thick] (18:1) circle (.13);

\end{tikzpicture}
\caption{An example of a clique, the sets of size $\left\lceil \frac{k}{2} \right\rceil-1$ are exactly the minimal control sets}
\end{figure}

\begin{figure}[h]
\centering
\begin{tikzpicture}
\begin{scope}{yshift=.86 cm}
	\draw[thick] (90:1)--(330:1)--(210:1)--(90:1);

\foreach \i in {330,90,210}
{
	\draw[white,fill] (\i:1) circle (.13);
	\draw[thick] (\i:1) circle (.13);
}
\draw[red,fill] (90:1) circle (.13);
\draw[thick] (90:1) circle (.13);
\end{scope}
\begin{scope}[xshift=2.5 cm, yshift=.31cm]

\foreach \i in {90,162,234,306,18}
{
	\draw[thick] (\i:1)--(\i+72:1);
}
\foreach \i in {90,162,234,306,18}
{
	\draw[white,fill] (\i:1) circle (.13);
	\draw[thick] (\i:1) circle (.13);
}
\draw[red,fill] (90:1) circle (.13);
\draw[thick] (90:1) circle (.13);

\end{scope}
\begin{scope}[xshift=5 cm, yshift=-.5 cm]
\draw (0,0)--(0,1)--(1,1)--(1,0)--(2,0)--(2,1);
\foreach \i in {0,1,2}
{
\foreach \j in {0,1}
{
	\draw[white,fill] (\i,\j) circle (.13);
	\draw[thick] (\i,\j) circle (.13);
}
}
\draw[red,fill] (1,1) circle (.13);
\draw[thick] (1,1) circle (.13);

\end{scope}

\end{tikzpicture}
\caption{If all node have degree at most  $2$, then choosing one node per connected component gives a control set}
\end{figure}
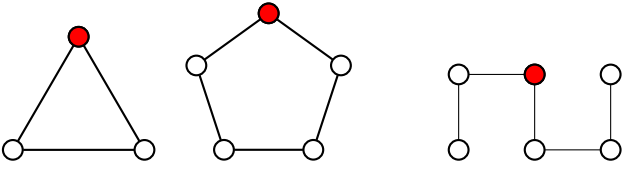

\section{Monotone crusades and valid control sets}\label{sec:monotone}

%

%
%
%

We now define the concept of monotone crusade,  that will play a crucial role in our theory. 

They are particular sequences of states, starting from an initial state, end at $\1$ and such that at every step, the number of their component with value $1$, without ever decreasing the potential. \\
More formally:
\begin{definition}[Adapted Monotone Crusade]\label{def:crusade} Let $\mc C\subseteq \mc V$. 
A \emph{monotone crusade from $\mc C$} is a sequence of vectors ${\bf x}^k\in\mc X$, for $k=0,\dots , m$ such that
\begin{enumerate}
\item ${\mathbf x}^0=\1_{\mc C}$, ${\mathbf x}^m=\1$
\item for every $k=1,\dots ,m-1$ there exists $i_k\in\mc V\setminus\mc C$ such that ${\mathbf x}^{k+1} ={\mathbf x}^k+\delta_{i_k}$ 
\end{enumerate}
Moreover, if given a function $V:\mc X\to \R$, it holds
\begin{enumerate}
\item[3.] $V({\mathbf x}^{k+1})\geq V({\mathbf x}^k)$ for $k=0,\dots , n-1$
\end{enumerate}
then the sequence ${\bf x}^k$ is called a $V$-\emph{adapted} monotone crusade from $\mc C$.
\end{definition}

A few comments on the above definition: 
\begin{remark} All nodes $i_1,\dots ,1_m$ appearing in (2) are necessarily distinct otherwise the condition ${\bf x}^k\in\{0,1\}^n$ for all $k$ would be violated. Indeed, we must have that $\mc V\setminus\mc C=\{i_1,\dots ,1_m\}$ and thus $m=|\mc V\setminus\mc C|$. This allows for a monotone crusade from $\mc C$ to be equivalently characterized by the sequence of nodes $(i_k)$, the induced order on the nodes or by the sequence of increasing support sets $(S_k)$ defined by $S_k=S_{{\bf x}^k}$, for $k=0,\dots , m$ having the property that $S_0=\mc C$ and $S_m=\mc V$. 
\end{remark}
\begin{remark}
We can also define a decreasing version of the monotone crusade where ${\mathbf x}^0=\1$, ${\mathbf x}^m=\1_{\mc C}$ and where. for every $k$, ${\mathbf x}^{k+1} ={\mathbf x}^k-\delta_{i_k}$. This will be called a \emph{decreasing monotone crusade to $\mc C$} ($V$-\emph{adapted} if property 3. holds true). 
\end{remark}

{
\begin{definition}[Valid control set] 
A set $\mathcal{C}$ is $V$-\emph{valid} if there exists a $V$-adapted monotonous crusade from $\mc C$.
\end{definition}

Main goal of the rest of this section is to show that the class of $\Phi_c$-{valid} control sets (we recall that $\Phi_c$ is the potential of the majority game) coincides with the class of sufficient control sets defined in (Definition \ref{def:sufficient}).


The following property is instrumental to our results.}

\begin{lemma}[monotonicity of Coordination Game]\label{lemma:monotonicity}

For all $\x, \y \in\mc X$ and $  i \in \mc{V}$, the following conditions hold:
\begin{enumerate}
\item
if $\mathbf{x}\le \mathbf{y}$ and $\Phi_{\text c}(1,\x_{-i}) \ge \Phi_{\text c}(\mathbf{x})$ then $\Phi_{\text c}(1,\mathbf{y}_{-i}) \ge \Phi_{\text c}(\mathbf{y})$;
\item if $\mathbf{x}\ge \mathbf{y}$ and $\Phi_c(0,\x_{-i}) \ge \Phi_c(\mathbf{x})$ then $\Phi_c(0, \mathbf{y}_{-i}) \ge \Phi_c(\mathbf{y})$.
\end{enumerate}


\end{lemma}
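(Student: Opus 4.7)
My plan is to reduce both statements to elementary observations about the counts $n_{i,0}$ and $n_{i,1}$ defined in equation \eqref{n01}, via the potential property \eqref{potential}. The key identity is that, since $\lambda^{\text c}_i(\mathbf{x}) = n_{i,x_i}(\mathbf{x})$ (neighbors of $i$ that agree with $i$) and $\lambda^{\text c}_i(1,\mathbf{x}_{-i}) = n_{i,1}(\mathbf{x})$ (which depends only on $\mathbf{x}_{-i}$ through the neighbors of $i$), the potential property gives
\[
\Phi_{\text c}(1,\mathbf{x}_{-i}) - \Phi_{\text c}(\mathbf{x}) \;=\; n_{i,1}(\mathbf{x}) - n_{i,x_i}(\mathbf{x}),
\]
and analogously for the action $0$. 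So both hypotheses and conclusions of the lemma are really inequalities between $n_{i,0}$ and $n_{i,1}$.

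Next I record the trivial monotonicity of these counts: if $\mathbf{x}\le\mathbf{y}$ then $n_{i,1}(\mathbf{x})\le n_{i,1}(\mathbf{y})$ and $n_{i,0}(\mathbf{x})\ge n_{i,0}(\mathbf{y})$, since going from $\mathbf{x}$ to $\mathbf{y}$ can only flip neighbors of $i$ from $0$ to $1$. This is the only combinatorial ingredient required.

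For part (1), I split on $y_i$. If $y_i=1$ the conclusion $n_{i,1}(\mathbf{y})\ge n_{i,y_i}(\mathbf{y})$ is immediate. If $y_i=0$, then $\mathbf{x}\le\mathbf{y}$ forces $x_i=0$ too, and the hypothesis reduces to $n_{i,1}(\mathbf{x})\ge n_{i,0}(\mathbf{x})$; chaining with the monotonicity observations yields
\[
n_{i,1}(\mathbf{y}) \;\ge\; n_{i,1}(\mathbf{x}) \;\ge\; n_{i,0}(\mathbf{x}) \;\ge\; n_{i,0}(\mathbf{y}),
\]
which is exactly $\Phi_{\text c}(1,\mathbf{y}_{-i})\ge\Phi_{\text c}(\mathbf{y})$. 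Part (2) is completely symmetric, swapping the roles of actions $0$ and $1$ and reversing the order.

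I do not foresee a real obstacle: the whole content of the lemma is that $\Phi_{\text c}$-increasing switches are preserved by monotone enlargement because the marginal gain of switching agent $i$ to action $\alpha$ depends only on the counts of neighbors in each action, and those counts move monotonically with the configuration. The only thing to be careful about is the boundary case where $x_i\neq y_i$, handled by noting that $\mathbf{x}\le\mathbf{y}$ with binary entries forces compatibility of the nontrivial case with the component at $i$.
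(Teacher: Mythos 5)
Your proposal is correct and follows essentially the same route as the paper's proof: case-split on the value at coordinate $i$ (with the trivial case $y_i=1$ and the forced case $x_i=y_i=0$), then use the fact that the marginal potential change of switching $i$ to $1$ equals $n_{i,1}-n_{i,0}$ together with the monotonicity of these neighbor counts under $\mathbf{x}\le\mathbf{y}$. The only cosmetic difference is that you chain the four counts $n_{i,1}(\mathbf{y})\ge n_{i,1}(\mathbf{x})\ge n_{i,0}(\mathbf{x})\ge n_{i,0}(\mathbf{y})$ directly, whereas the paper compares the two differences, which is the same inequality.
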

\begin{proof} We only prove the first assertion, the second can be obtained by exchanging the role of $0$ and $1$.
%

If $y_i=1$, we have that $(1, \mathbf{y}_{-i})=\mathbf y$ and there is nothing to prove. If $x_i=1$, then the inequality $ \mathbf{x}\le \mathbf{y}$ ensure that $y_i=1$ and we are thus in the previous case.



We now consider the case when both $x_i$ and $y_i$ have value $0$. 
%
%
%
Note that, if $\mathbf z$ is such that $z_i=0$, the variation of the potential when player $i$ changes its action from $0$ to $1$ can be expressed as 
$$\Phi_c(1,\mathbf{z_{-i}}) - \Phi_c(\mathbf{z})= n_{i,1}(\mathbf{z}) -n_{i,0}(\mathbf{z})$$
where, we recall, $n_{i,1}(\mathbf{z})$ and $n_{i,0}(\mathbf{z})$ are the number of neighbors of $i$ whose action is, respectively, $1$ and $0$.
%
%
%
%
As $\mathbf{x} \le \mathbf{y}$, we have that $n_{i,0}(\mathbf{y}) \le n_{i,0}(\mathbf{x})$ and $n_{i,1}(\mathbf{y}) \ge n_{i,0}(\mathbf{x})$. Hence,
$$\begin{array}{l}\Phi_c(1, \mathbf{y}_{-i}) -\Phi_c(\mathbf{y})=n_{i,1}(\mathbf{y}) -n_{i,0}(\mathbf{y})\\\ge n_{i,1}(\mathbf{x}) -n_{i,0}(\mathbf{x})=\Phi_c(1, \mathbf{x}_{-i}) - \Phi_c(\mathbf{x})\end{array}$$
This yields the thesis.
%
%
%
%
%
%
\qed
\end{proof}

\begin{proposition}[monotonicity for inclusion]\label{prop:superset}
\label{mono}
A superset of a $\Phi_c$-{valid} control set  is a $\Phi_c$-{valid} control set.
\end{proposition}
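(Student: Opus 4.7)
The plan is to take an existing $\Phi_c$-adapted monotone crusade from $\mc C$ and turn it into one starting from $\mc C'$ by ``skipping'' the steps that correspond to nodes already in $\mc C'$. The pointwise monotonicity of the potential established in Lemma \ref{lemma:monotonicity} is precisely what guarantees that the skipped sequence remains potential-nondecreasing.

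More precisely, let $\mc C \subseteq \mc C' \subseteq \mc V$ and fix a $\Phi_c$-adapted monotone crusade $\x^0 = \1_{\mc C}, \x^1, \ldots, \x^m = \1$ from $\mc C$, with associated support sequence $S_k = S_{\x^k}$ and node sequence $i_1, \ldots, i_m$ (so $S_k = S_{k-1} \cup \{i_k\}$ and $i_k \in \mc V \setminus \mc C$). Define $T_k = S_k \cup \mc C'$ and $\y^k = \1_{T_k}$. Then $T_0 = \mc C'$, $T_m = \mc V$, and each transition $T_{k-1} \to T_k$ either leaves $T$ unchanged (when $i_k \in \mc C' \setminus \mc C$) or adds the single node $i_k \in \mc V \setminus \mc C'$. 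After collapsing consecutive duplicates, this yields a sequence satisfying conditions (1) and (2) of Definition \ref{def:crusade} for the control set $\mc C'$. Observe also that $\y^k \ge \x^k$ componentwise for every $k$, since $T_k \supseteq S_k$.

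It remains to verify condition (3), i.e.\ that $\Phi_c(\y^k) \ge \Phi_c(\y^{k-1})$ whenever $i_k \in \mc V \setminus \mc C'$. In that case $i_k \notin T_{k-1}$ and $i_k \notin S_{k-1}$, so $(\y^{k-1})_{i_k} = (\x^{k-1})_{i_k} = 0$, and the identities $\y^k = (1, \y^{k-1}_{-i_k})$, $\x^k = (1, \x^{k-1}_{-i_k})$ hold. The hypothesis that the original crusade is $\Phi_c$-adapted gives $\Phi_c(1, \x^{k-1}_{-i_k}) \ge \Phi_c(\x^{k-1})$; combined with $\x^{k-1} \le \y^{k-1}$, part (1) of Lemma \ref{lemma:monotonicity} yields $\Phi_c(1, \y^{k-1}_{-i_k}) \ge \Phi_c(\y^{k-1})$, which is exactly $\Phi_c(\y^k) \ge \Phi_c(\y^{k-1})$.

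There is no real obstacle beyond correctly setting up the skipping construction and invoking the lemma in the right direction; the only point requiring a little care is to check that at every step kept in the new crusade, the coordinate being flipped is indeed $0$ in $\y^{k-1}$, which is immediate from $i_k \notin S_{k-1} \cup \mc C' = T_{k-1}$.
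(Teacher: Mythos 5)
Your proposal is correct and follows essentially the same route as the paper: the paper also forms the new crusade by taking $\y^h=\max\{\1_{\mc C'},\x^{k_h}\}$ (i.e.\ the union of the current support with $\mc C'$), skips the steps flipping nodes of $\mc C'\setminus\mc C$, and invokes part (1) of Lemma \ref{lemma:monotonicity} with $\x^{k-1}\le\y^{k-1}$ to transfer the potential increase. Your write-up is in fact a more carefully spelled-out version of the paper's terse argument, with the coordinate-is-zero check made explicit.
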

\begin{proof}
Assume that $\mc C$ is a $\Phi_c$-{valid} control set and let $\mc C'\supseteq \mc C$.
Let ${\bf x}^k$ be a $\Phi_c$-{adapted monotone crusade} from $\mc C$ with associated sequence of points $(i_k)$ for $k=1,\dots , m=n-|\mc C|$ such that
${\bf x}^{k+1}- {\bf x}^k=\delta_{i_k}$ for each $k$.
 Consider the subsequence of points $i_{k_1}, i_{k_2}, \dots , i_{k_{m'}}$ that are in $\mc V\setminus\mc C'$ and put $\y^h=\max\{\1_{\mc C'}, \x^{k_h}\}$. By construction, we have that $\y^{h}\geq \x^{k_{h+1}-1}$ and thus, by Lemma \ref{lemma:monotonicity} and the fact that ${\bf x}^k$ is a  $\Phi_c$-{adapted monotone crusade} from $\mc C$, we have that $\Phi_c({\bf y}^{h})\leq \Phi_c({\bf y}^{h+1})$.

\qed 
\end{proof}
\begin{remark}[full set]
The full set is always a $\Phi_c$-{valid} control set 
\end{remark}

The following clarifies the connection between valid control sets for the majority game and sufficient control sets introduced in the previous section. 

\begin{theorem}\label{theo:valid-sufficient} A subset $\mc C\subseteq \mc V$ is a sufficient control set iff it is a $\Phi_c$-{valid} control set.
\end{theorem}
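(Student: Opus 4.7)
The equivalence pairs a dynamical reachability statement with a combinatorial existence statement, and I would prove the two directions separately, both times leveraging the monotonicity of the potential recorded in Lemma~\ref{lemma:monotonicity}.

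For the direction $(\Leftarrow)$, I would first use the fact that $X_t^{\mc C}$ is a finite Markov chain whose potential $\Phi_c$ is non-decreasing along trajectories (by \eqref{BRtransition}) to reformulate \eqref{convergence} as the purely combinatorial statement that from every $x_0\in\mc X^{(\mc C)}$ there exists a best-response path ending at $\1$: indeed, any Nash equilibrium distinct from $\1$ would be a sink with strictly smaller potential, from which no such path can exist. Given a $\Phi_c$-adapted monotone crusade $\x^0=\1_{\mc C},\dots,\x^m=\1$ with $\x^{k+1}=\x^k+\delta_{i_k}$, and an arbitrary $x_0\in\mc X^{(\mc C)}$ (so $x_0\ge\1_{\mc C}$), I would construct such a path by replaying only those steps of the crusade whose target agent is still at $0$. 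Letting $k_0<k_1<\dots<k_{m'-1}$ enumerate the indices with $(x_0)_{i_k}=0$, set $\y^0=x_0$ and $\y^{h+1}=\y^h+\delta_{i_{k_h}}$. A short induction yields $\y^h\ge\x^{k_h}$, so Lemma~\ref{lemma:monotonicity}(1) applied with $\x=\x^{k_h}$, $\y=\y^h$ lifts the crusade's weakly-improving step to a weakly-improving step at the higher state, making every $\y^h\to\y^{h+1}$ a valid best-response move. Since $\y^{m'}=\1$, the desired path is found. This argument closely parallels the proof of Proposition~\ref{prop:superset}.

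For the direction $(\Rightarrow)$, sufficiency of $\mc C$ applied in particular to $x_0=\1_{\mc C}$ yields a best-response path $\1_{\mc C}=z_0,z_1,\dots,z_L=\1$ whose steps $z_{j+1}=z_j\pm\delta_{i_j}$ are weakly $\Phi_c$-improving but may include $1\to 0$ reversals. My plan is to extract a monotone crusade via the componentwise running maximum $\ov z_j:=\max(z_0,\dots,z_j)$, which automatically collapses reversals. Plainly $\ov z_0=\1_{\mc C}$, $\ov z_L=\1$, and $(\ov z_j)$ is componentwise non-decreasing with unit $\delta_i$-increments for $i\in\mc V\setminus\mc C$. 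The key verification is that every non-trivial increment $\ov z_{j+1}=\ov z_j+\delta_i$ is weakly $\Phi_c$-improving: a decreasing step $z_{j+1}=z_j-\delta_i$ leaves $\ov z$ unchanged, while an increasing step $z_{j+1}=z_j+\delta_i$ producing a genuine update forces $(\ov z_j)_i=(z_j)_i=0$, so Lemma~\ref{lemma:monotonicity}(1) with $\x=z_j$, $\y=\ov z_j$ promotes $\Phi_c(1,z_{j,-i})\ge\Phi_c(z_j)$ to $\Phi_c(\ov z_{j+1})\ge\Phi_c(\ov z_j)$. Deleting repeated consecutive entries in $(\ov z_j)$ produces the required $\Phi_c$-adapted monotone crusade.

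The main conceptual difficulty lies in $(\Rightarrow)$: turning a best-response path that may backtrack into a strictly monotone crusade. The envelope construction discards $1\to 0$ reversals by design, and Lemma~\ref{lemma:monotonicity}(1) is precisely what guarantees that the surviving $0\to 1$ transitions remain weakly potential-improving when re-evaluated at the higher envelope states. The $(\Leftarrow)$ direction is a more routine reachability argument, whose only subtlety is the same use of Lemma~\ref{lemma:monotonicity}(1) to lift the crusade from its base state $\1_{\mc C}$ up to arbitrary configurations.
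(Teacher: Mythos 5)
Your proof is correct and follows essentially the same route as the paper's: your running-maximum envelope in the $(\Rightarrow)$ direction coincides with the paper's construction based on each agent's first switch to $1$, with Lemma~\ref{lemma:monotonicity} applied in exactly the same way. In the $(\Leftarrow)$ direction you merely inline the argument the paper delegates to Proposition~\ref{prop:superset} (lifting the crusade from $\1_{\mc C}$ to an arbitrary $x_0\in\mc X^{(\mc C)}$ by skipping agents already at $1$), followed by the same standard Markov-chain hitting argument.
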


\begin{proof}
We first show that a sufficient control set is $\Phi_c$-{valid}. If $\mc C$ is a sufficient control set, there exists a sequence of vectors ${\bf y}^0,\dots , {\bf y}^T\in\mc X^{(\mc C)}$ such that ${\bf y}^0=\1_{\mc C}$ and ${\bf y}^T = \1$ that the best response dynamics follows with positive probability. This is equivalent to saying, using  the definition of best response dynamics (see in particular property (\ref{BRtransition})) , that 
\begin{enumerate}
\item ${\bf y}^{k+1}- {\bf y}^k=\pm\delta_{i_k}$ for all $k=0,\dots , T-1$;
\item $\Phi_c({\bf y}^0)\leq\cdots \leq \Phi_c({\bf y}^T)$.
\end{enumerate}
For every $i\not\in\mc C$, define
$$k(i)=\min\{k=1,\dots , T\;|\; {\bf y}^{k}- {\bf y}^{k-1}=\delta_{i}\}$$
that is the first time when agent $i$ change its action to $1$ in the sequence ${\bf y}^t$. Order now the agents in $\mc V\setminus\mc C$ as $i_1,\dots , i_m$ in such a way that 
$k_{i_1}<k_{i_2}<\cdots <k_{i_m}$. Consider the increasing monotone crusade ${\bf x}^h$ associated with the sequence of points $(i_h)$, namely,
$${\bf x}^h=\1_{\mc C}+\sum_{h'\leq h}\delta_{i_h}$$
and notice that ${\bf x}^{h-1}\geq {\bf y}^{k(i_h)-1}$. Since $\Phi_c({\bf y}^{k(i_h)-1})\leq \Phi_c({\bf y}^{k(i_h)})$, it follows from Lemma \ref{lemma:monotonicity} that $\Phi_c({\bf x}^{h-1})\leq \Phi_c({\bf x}^{h})$. This tells us that 
${\bf x}^h$ is a $\Phi_c$-{adapted monotone crusade} from $\mc C$ and, thus, $\mc C$ a $\Phi_c$-valid control set.

We now show that a $\Phi_c$-{valid} control set $\mc C$ is sufficient.\\
 We fix any initial condition ${\bf x}_0\in\mc X^{(\mc C)}$ and we put $\mc C'=S_{{\bf x}_0}$. $\mc C'$ is a superset of $\mc C$ and, on the basis of Proposition \ref{prop:superset}, $\mc C'$ is also a $\Phi_c$-{valid} control set. Let ${\bf x}^k$ be a corresponding $\Phi_c$-{adapted monotone crusade} from $\mc C'$. By the properties of adapted monotone crusades (properties 2. and 3. in Definition \ref{def:crusade}) and the characterization (\ref{BRtransition}) of the transition matrix of the best response dynamics $X_t$, starting from ${\bf x}_0$, the MC $X_t$ will follow such a sequence with positive probability. Thus, from any initial condition, $X_t$ will reach $\1$ with positive probability. A standard result on MC then yields the thesis.

\qed 
\end{proof}

}

\section{Markov chains and backward search algorithms}\label{sec:MC}
The characterization of sufficient control sets through the concept of monotone crusades suggest the possibility that such sets can be found starting from the configuration $\1$, iteratively replacing $1$'s with $0$'s in the attempt to follow backwards a monotone crusade. To this aim we now introduce a family of MC $Z_t^{\epsilon}$ on the binary space $\mc X$, parameterized by $\epsilon\in [0,1]$ that will be the core part of our algorithms.

Transitions of $Z_t^{\epsilon}$ are described as follows:\\ At every discrete time, a node uniformly at random $i$ is activated.
If its neighbors with current action $1$ ($n_{i,1}$) are strictly less than its neighbors with current action $0$ ($n_{i,0}$), it stays still. Otherwise,
 if its  action is $1$ it changes to $0$ with probability $1$, while if its action is $0$, it changes to $1$ with probability $\epsilon$.

The only non zero non trivial transition probabilities of $Z_t^{\epsilon}$ are the following. Given ${\bf x}\in\mc X$, 
\beq\label{zepsilon}
\begin{array}{l}
x_i=1,\, n_{i,1}({\bf x})\geq n_{i,0}({\bf x}) 
\; \Rightarrow\;  P^{\epsilon}_{{\bf x},(0,{\bf x}_{-i})}=1/n 
\\
x_i=0, \, n_{i,1}({\bf x})\geq n_{i,0}({\bf x}) 
\;\;\Rightarrow\;  P^{\epsilon}_{{\bf x},(1,{\bf x}_{-i})}=\epsilon/n
\end{array}
\eeq

In the case when $\epsilon=0$, only transitions from $1$ to $0$ are allowed. In this case, the MC has absorbing points. The relation of these points with sufficient control sets is studied in the next result.

We denote 
$$
\begin{array}{lcl}
\mc Z & =& \{x\in\mc X\;|\; \P(\exists t_0\,:\, Z^0_{t_0}=x\;|\; Z_0^0=\1)>0\}\\[1pt]
\mc Z^{\infty} & = & \{x\in\mc X\;|\; \P(\exists t_0\,:\, Z^0_{t}=x \,\forall t\geq t_0\;|\; Z^0_0=\1)>0\}
\end{array}
$$
the sets of reachable and absorbing state of the chain $Z^{0}$.

\begin{theorem}\label{theo:Z} The following facts hold:
\begin{enumerate}
\item $\mc C$ is a sufficient control set iff $\1_{\mc C}\in\mc Z$;
\item $\mc C$ is a minimal sufficient control set iff $\1_{\mc C}\in\mc Z^{\infty}$;
\end{enumerate}
\end{theorem}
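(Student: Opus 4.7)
The plan is to exploit a bijective correspondence between trajectories of $Z^0$ starting at $\1$ and $\Phi_c$-adapted monotone crusades. The key observation is that a single $Z^0$ step $\1_{\mc D}\to\1_{\mc D\setminus\{i\}}$ (necessarily flipping some $i\in\mc D$ from $1$ to $0$) is enabled iff $n_{i,1}(\1_{\mc D})\geq n_{i,0}(\1_{\mc D})$, which is exactly the condition $\Phi_c(\1_{\mc D})-\Phi_c(\1_{\mc D\setminus\{i\}})\geq 0$, i.e.\ that the reverse step (adding $i$) is $\Phi_c$-non-decreasing. Reversing a $Z^0$ trajectory from $\1$ to $\1_{\mc C}$ thus produces a $\Phi_c$-adapted monotone crusade from $\mc C$ to $\1$, and conversely any such crusade, read backwards, yields a valid $Z^0$ trajectory.

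Part (1) is then immediate: $\1_{\mc C}\in\mc Z$ iff such a trajectory exists iff $\mc C$ admits a $\Phi_c$-adapted monotone crusade iff $\mc C$ is $\Phi_c$-valid, which by Theorem \ref{theo:valid-sufficient} is equivalent to $\mc C$ being sufficient. For (2), I first characterize absorption: since $\epsilon=0$ forbids $0\to1$ transitions and permits $1\to0$ transitions only when $n_{i,1}\geq n_{i,0}$, the state $\1_{\mc C}$ is absorbing iff $n_{i,1}(\1_{\mc C})<n_{i,0}(\1_{\mc C})$ for every $i\in\mc C$.

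The forward implication of (2) is a short contradiction argument: if $\mc C$ is minimal sufficient, then $\1_{\mc C}\in\mc Z$ by (1); if $\1_{\mc C}$ were not absorbing, some $i\in\mc C$ would enable a $Z^0$ transition to $\1_{\mc C\setminus\{i\}}$, yielding by concatenation that $\1_{\mc C\setminus\{i\}}\in\mc Z$, hence by (1) that $\mc C\setminus\{i\}$ is sufficient, violating minimality.

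The converse is the real technical content. Assuming $\1_{\mc C}\in\mc Z^{\infty}$, by (1) the set $\mc C$ is sufficient; for minimality I would argue by contradiction, using Proposition \ref{prop:superset} to reduce to the case of a strict subset of the form $\mc C'=\mc C\setminus\{i\}$. Reachability of $\1_{\mc C'}$ yields a $Z^0$ trajectory in which $i$ is flipped at some unique step, from a state $\1_{\mc D}$ with $\mc D\supseteq\mc C$ satisfying $n_{i,1}(\1_{\mc D})\geq n_{i,0}(\1_{\mc D})$. The plan is to leverage Lemma \ref{lemma:monotonicity} to transport this condition down to $\1_{\mc C}$, contradicting the absorbing property. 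The main obstacle is precisely this transport: Lemma \ref{lemma:monotonicity} propagates best-response conditions from smaller to larger configurations, and the needed direction is the reverse, so a more delicate argument is required, presumably by inductively peeling off the nodes of $\mc D\setminus\mc C$ while verifying at each stage that the remaining crusade data still produces a $Z^0$-valid flip of $i$, eventually from $\1_{\mc C}$ itself.
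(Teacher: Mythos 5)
Your treatment of part (1) and of the implication ``minimal sufficient $\Rightarrow$ $\1_{\mc C}\in\mc Z^{\infty}$'' is correct and is essentially the paper's own argument: the step-by-step equivalence between a $Z^0$-transition removing $i$ and the reversed, $\Phi_c$-non-decreasing insertion of $i$ (both being the condition $n_{i,1}\geq n_{i,0}$, which is unaffected by the flip since $i\notin N_i$), combined with Theorem \ref{theo:valid-sufficient}, and then the one-step contradiction argument for minimality. The paper's proof of the theorem goes no further than this.

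The converse of part (2), which you rightly single out as the real difficulty and leave as a plan, cannot be completed: it is false, and the obstruction you identified (Lemma \ref{lemma:monotonicity} transports the flip condition only from smaller to larger configurations) is not a technical nuisance but the heart of a genuine counterexample. The paper itself exhibits it in Figure \ref{fig:ex2} and states explicitly afterwards that the absorbing states in $\mc Z^{\infty}$ ``are not even minimal in general.'' Concretely: take two non-adjacent hubs $a,b$ with two common degree-$2$ neighbors $u,v$, and attach two extra leaves to each hub. The state $\1_{\{a,b\}}$ is reachable by $Z^0$ (remove the four leaves, then $u$ and $v$) and is absorbing, since at $\1_{\{a,b\}}$ each hub has all four of its neighbors at $0$; yet $\{a,b\}$ is not minimal, because $\{a\}$ alone admits the $\Phi_c$-adapted crusade $u,v,b,$ leaves, hence is sufficient. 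In your proposed ``peeling'' argument, the flip of $b$ along the trajectory reaching $\1_{\{a\}}$ occurs while $u,v$ are still at $1$, and the majority condition for $b$ is destroyed exactly when those nodes are peeled off, so no refinement of the transport step can work. The correct resolution is not a more delicate argument but a weakening of the statement: part (2) holds only as the single implication you did prove (minimal sufficient control sets are absorbing reachable states), which is all the paper establishes and all that its subsequent development (the need to use $Z^{\epsilon}_t$ with $\epsilon>0$ rather than $Z^0_t$) actually relies on.
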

\begin{proof}

(1):
By definition, if $\x=\1_{\mc C}\in \mc Z $, there exists a sequence of configuration vectors $y^k$, for $k=0,\dots , m$ such that 
${\bf y}^0=\1$ and ${\bf y}^m = \1_{\mc C}$ satisfying the properties
\begin{enumerate}
\item ${\bf y}^{k}- {\bf y}^{k+1}=\delta_{i_k}$ for all $k=0,\dots , m-1$;
\item $\Phi_c({\bf y}^0)\geq\cdots \geq \Phi_c({\bf y}^T)$.
\end{enumerate}
Then $x^k=y^{m-k}$ is a $\Phi^c$-adapted monotone crusade from 
$\mc C$ and this yields that $\mc C$ is a $\Phi_c$-valid control set and thus also a sufficient control set by virtue of Theorem \ref{theo:valid-sufficient}. Inverting this argument we prove the other implication. 

%
%

(2): If $\mc C$ is a minimal sufficient control set, we know from (1) that $\1_{\mc C}\in\mc Z$. If, by contradiction, $\1_{\mc C}\not\in\mc Z^{\infty}$, then, from $\x=\1_{\mc C}$, the MC $Z^0_t$ could reach, in one step, a different state $\x'=\1_{\mc C'}$ with $\mc C'\subsetneq \mc C$. This contradicts minimality. 

\end{proof}

Theorem \ref{theo:Z} allows to reformulate the optimization problem as follows:
\beq\label{optimization}\min\limits_{\x\in\mc Z}||\x||_1\eeq
where $||x||_1=\sum_ix_i$. Optimal sufficient control sets $\mc C$ are those for which $\1_{\mc C}$ solves (\ref{optimization}).

The MC $Z^0_{t}$ is naturally related to the minority game whose definition we briefly recall thus: Given an undirected graph $\mc G=(\mc V, \mc E)$, we define the minority game on $\mc G$ as the binary game where each player $i\in\mc V$ has utility $\lambda^{\text a}_i:\mc X\to \R$ given by
$$\lambda^{\text a}_i({\bf x})=|\{j\in N_i\,|\, x_j\neq x_i\}|$$
that is simply the number of neighbors with which $i$ is in disagreement. 
We denote by $\mc N_{\text a}$ the set of Nash equilibria of the minority game.
This game is potential with a potential $\Phi_{\text a}$ that is just the opposite than the potential of the majority game:
\beq \Phi_{\text a}({\bf x})=-\Phi_{\text c}(\x)\eeq

The following property clarifies the relation of the minority game with our problem.
\begin{proposition}\label{prop:minority} 
$\mc N_{\text a}\subseteq \mc Z$: Nash equilibria of the minority game are valid control sets.
\end{proposition}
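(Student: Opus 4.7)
Let $\mathbf{x} \in \mc N_{\text a}$ and write $\mathbf{x} = \1_{\mc C}$ where $\mc C = S_{\mathbf{x}}$. The goal is to show $\1_{\mc C} \in \mc Z$, which by definition of $\mc Z$ amounts to exhibiting a trajectory of the chain $Z^0_t$ from $\1$ to $\1_{\mc C}$ of positive probability. The plan is to construct such a trajectory explicitly: I would enumerate $\mc V \setminus \mc C = \{i_1, \ldots, i_m\}$ in an arbitrary order, set $\mathbf{y}^0 = \1$ and $\mathbf{y}^k = \mathbf{y}^{k-1} - \delta_{i_k}$ so that $\mathbf{y}^m = \1_{\mc C}$, and then verify that each one-step transition $\mathbf{y}^{k-1} \to \mathbf{y}^k$ is feasible under the transition rules of $Z^0$ given in \eqref{zepsilon} with $\epsilon = 0$.

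For the verification I would proceed as follows. Flipping $i_k$ from $1$ to $0$ at $\mathbf{y}^{k-1}$ has probability $1/n$ exactly when $n_{i_k,1}(\mathbf{y}^{k-1}) \geq n_{i_k,0}(\mathbf{y}^{k-1})$. The neighbors of $i_k$ currently playing $0$ in $\mathbf{y}^{k-1}$ are precisely those in $N_{i_k} \cap \{i_1, \ldots, i_{k-1}\} \subseteq N_{i_k} \cap (\mc V \setminus \mc C)$, whence $n_{i_k,0}(\mathbf{y}^{k-1}) \leq n_{i_k,0}(\mathbf{x})$. On the other hand, $i_k \notin \mc C$ means $x_{i_k} = 0$, and since $\mathbf{x}$ is a Nash equilibrium of the minority game one has $\lambda^{\text a}_{i_k}(0, \mathbf{x}_{-i_k}) \geq \lambda^{\text a}_{i_k}(1, \mathbf{x}_{-i_k})$, which rewrites as $n_{i_k,0}(\mathbf{x}) \leq |N_{i_k}|/2$. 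Chaining the two estimates and using $n_{i_k,1}(\mathbf{y}^{k-1}) + n_{i_k,0}(\mathbf{y}^{k-1}) = |N_{i_k}|$ yields $n_{i_k,0}(\mathbf{y}^{k-1}) \leq |N_{i_k}|/2 \leq n_{i_k,1}(\mathbf{y}^{k-1})$, so the transition is allowed and the concatenated path $\mathbf{y}^0 \to \cdots \to \mathbf{y}^m$ has positive probability.

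The main (indeed, essentially the only) nontrivial point is the transfer from the Nash inequality holding at the target $\mathbf{x}$ to dynamical feasibility at every intermediate $\mathbf{y}^{k-1}$. The monotone observation that flipping only nodes already outside $\mc C$ cannot produce more $0$-neighbors for $i_k$ than $\mathbf{x}$ itself already has is what makes the explicit construction insensitive to the ordering of $i_1, \ldots, i_m$, and it is precisely the point at which the minority best-response condition enters.
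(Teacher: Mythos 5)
Your proof is correct and follows essentially the same route as the paper: flip the nodes of $\mc V\setminus\mc C$ one by one in an arbitrary order and transfer the minority-game Nash condition at $\x$ to every intermediate configuration via the monotonicity of the neighbor counts. The only difference is in packaging: the paper expresses that transfer through the potential $\Phi_c$ and Lemma \ref{lemma:monotonicity}, concluding via the adapted-crusade characterization and Theorem \ref{theo:Z}, whereas you verify the $Z^0$ transition condition in (\ref{zepsilon}) directly by counting, effectively inlining that lemma.
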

\begin{proof}
Let $\x\in\mc N_a$ and let ${\bf y}^k$, for $k=1,\dots , m$ be any decreasing monotone crusade from $\1$ to ${\bf x}$. By construction, ${\bf x}\le {\bf y}^k \le \1$ for all $k$. 
Consider the sequence of nodes $(i_k)$ such that ${\bf y}^{k-1}-{\bf y}^{k}=\delta_{i_k}$. We have that ${\bf  x}_{i_k}=0$ for every $k$.
 For all $k$, since ${\bf x}$ is a Nash equilibrium then $0$  is in $i_k$'s minority best response, thus $\Phi_a(\x) \ge \Phi_a(1,\x_{-i})$, or equivalently, $\Phi_c(\x) \le \Phi_c(1,\x_{-i})$. By Lemma \ref{lemma:monotonicity}, it follows that $\Phi_c(\y^{k}) \le \Phi_c(1,\y^{k}_{-i}) = \Phi_c(\y^{k-1})$. Therefore $\x^h=\y^{m-h}$ is a 
$\Phi_c$-adapted monotone crusade from $\mc C=S_{\x}$. By virtue of Theorem \ref{theo:Z}, we have that $\x\in\mc Z$.

\end{proof}

We have the following simple but not obvious consequence.
\begin{corollary}[existence]
For any graph, there exist a sufficient control set whose size is less or equal to half the number of nodes. 
\end{corollary}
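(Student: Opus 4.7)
The plan is to exhibit a sufficient control set of size at most $n/2$ by invoking Proposition \ref{prop:minority}, which reduces the problem to finding a Nash equilibrium of the minority game whose support has cardinality at most $n/2$.

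First, I would note that Nash equilibria of the minority game always exist: since the minority game is a potential game with potential $\Phi_{\text a}=-\Phi_{\text c}$, any trajectory of its best response dynamics (starting, say, from $\0$) is monotone in $\Phi_{\text a}$ and therefore reaches a Nash equilibrium in finite time. Hence $\mc N_{\text a}\neq\emptyset$.

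Next, I would use the $0\leftrightarrow 1$ symmetry of the minority game. For any configuration $\x\in\mc X$ and any node $i$, the quantity $\lambda^{\text a}_i(\x)$ counts disagreements along edges incident to $i$, which is invariant under the global flip $\x\mapsto \1-\x$. Therefore $\x\in\mc N_{\text a}$ if and only if $\1-\x\in\mc N_{\text a}$. Given any $\x\in\mc N_{\text a}$, we have $\|\x\|_1+\|\1-\x\|_1=n$, so at least one of the two Nash equilibria $\x$ and $\1-\x$ has $\|\cdot\|_1\leq n/2$. Pick such a Nash equilibrium $\x^\star$ and set $\mc C=S_{\x^\star}$, so that $|\mc C|\leq n/2$.

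Finally, by Proposition \ref{prop:minority}, $\x^\star\in\mc Z$, and by Theorem \ref{theo:Z}(1), $\mc C$ is a sufficient control set. This gives the claimed bound. There is no significant obstacle in this argument; the only slightly delicate point is remembering that the symmetry of the minority game is what gives us the two-for-one pair of equilibria needed to guarantee one of cardinality at most $n/2$, a feature not shared by the majority (coordination) game itself.
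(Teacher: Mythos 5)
Your proposal is correct and follows essentially the same route as the paper's own proof: existence of a minority-game Nash equilibrium via the potential, the flip symmetry $\x\mapsto\1-\x$ to obtain a pair of equilibria whose supports have complementary sizes, and Proposition \ref{prop:minority} (together with Theorem \ref{theo:Z}) to conclude that the smaller one is a sufficient control set. The extra details you supply (why the flip preserves Nash equilibria, the explicit appeal to Theorem \ref{theo:Z}(1)) are just a fuller write-up of the same argument.
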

\begin{proof}
The minority game, being potential, admits at least one Nash equilibrium $\x\in\mc N_a$. By symmetry  $\tilde{\x}=\1-\x$ is also a Nash equilibrium. Proposition \ref{prop:minority} tells us that they are both sufficient control sets and one of the two has the required size property. 
\end{proof}

Theorem \ref{theo:Z} and Proposition \ref{prop:minority} can not be improved further as shown by the following examples.

In the following  figures  \ref{fig:ex1} and \ref{fig:ex1b} we report two examples of optimal sufficient control sets that are not Nash equilibria of the minority game, while figure \ref{fig:ex2} is a counter example to the assumption that $\mc Z^\infty$ elements are minimal.

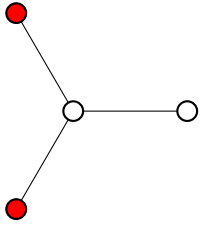
\begin{figure}[h]

\centering
\begin{tikzpicture}

\foreach \i in {0,120,240}
{
\draw (\i:0) -- (\i:1.5);
\draw[fill] (\i:1.5) circle(.13);
}
\draw[fill,white] (0,0) circle (.13);
\draw[fill,white] (0:1.5) circle(.13);
\draw[thick] (0,0) circle (.13);
\draw[thick] (0:1.5) circle(.13);

\draw[fill,red] (120:1.5) circle(.13);
\draw[fill,red] (240:1.5) circle(.13);
\draw[thick] (120:1.5) circle(.13);
\draw[thick] (240:1.5) circle(.13);

\end{tikzpicture}
\caption{This set is a minimal control set for the inclusion, yet the rightmost node is not in its best response, making it not a Nash }
\label{fig:ex1}
\end{figure}
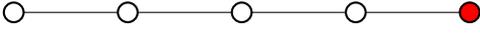
\begin{figure}[h]
\centering

\begin{tikzpicture}
\draw (0,0) -- (6,0);

\foreach \i in {0,1.5,3,4.5}
{

\draw[fill,white] (\i,0) circle(.13);
\draw[thick] (\i,0) circle(.13);
}
\draw[fill, red] (6,0) circle(.13);
\draw[thick] (6,0) circle(.13);
\end{tikzpicture}\\

\caption{This set is optimal, but not Nash}
\label{fig:ex1b}
\end{figure}

%
%
%


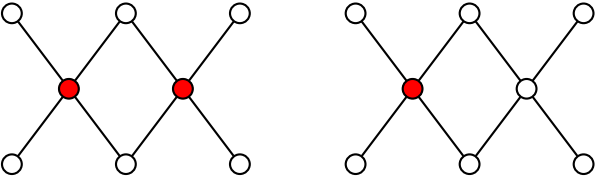
\begin{figure}[h]

\centering
\begin{tikzpicture}{scale=.8}
\draw[thick] (-.75,1) -- (0,0)--(.75,1)--(1.5,0)--(.75,-1)--(0,0)--(-.75,-1);
\draw[thick](2.25,1)--(1.5,0)--(2.25,-1);

\draw[fill,red] (0,0) circle (.13);
\draw[thick] (0,0) circle (.13);

\draw[fill,white] (.75,1) circle (.13);
\draw[thick] (.75,1) circle (.13);

\draw[fill,white] (.75,-1) circle (.13);
\draw[thick] (.75,-1) circle (.13);

\draw[fill,red] (1.5,0) circle (.13);
\draw[thick] (1.5,0) circle (.13);

\draw[fill,white] (2.25,1) circle (.13);
\draw[thick] (2.25,1) circle (.13);

\draw[fill,white] (2.25,-1) circle (.13);
\draw[thick] (2.25,-1) circle (.13);

\draw[fill,white] (-.75,1) circle (.13);
\draw[thick] (-.75,1) circle (.13);

\draw[fill,white] (-.75,-1) circle (.13);
\draw[thick] (-.75,-1) circle (.13);

\end{tikzpicture}
\hspace{1cm}
\begin{tikzpicture}{scale=.8}

\draw[thick] (-.75,1) -- (0,0)--(.75,1)--(1.5,0)--(.75,-1)--(0,0)--(-.75,-1);
\draw[thick](2.25,1)--(1.5,0)--(2.25,-1);

\draw[fill,red] (0,0) circle (.13);
\draw[thick] (0,0) circle (.13);

\draw[fill,white] (.75,1) circle (.13);
\draw[thick] (.75,1) circle (.13);

\draw[fill,white] (.75,-1) circle (.13);
\draw[thick] (.75,-1) circle (.13);

\draw[fill,white] (1.5,0) circle (.13);
\draw[thick] (1.5,0) circle (.13);

\draw[fill,white] (2.25,1) circle (.13);
\draw[thick] (2.25,1) circle (.13);

\draw[fill,white] (2.25,-1) circle (.13);
\draw[thick] (2.25,-1) circle (.13);

\draw[fill,white] (-.75,1) circle (.13);
\draw[thick] (-.75,1) circle (.13);

\draw[fill,white] (-.75,-1) circle (.13);
\draw[thick] (-.75,-1) circle (.13);

\end{tikzpicture}
\caption{Two sufficient control sets on the same graph, both in $\mc Z^\infty$ only the one on the right is minimal.}
\label{fig:ex2}
\end{figure}


The above considerations imply that we can not directly use the dynamics $Z^0_t$ as our algorithm to find optimal sufficient control sets as its absorbing state $\mc Z^{\infty}$ are not even minimal in general. To overcome this difficulty we will instead use the MC $Z^{\epsilon}_t$ with $\epsilon >0$. As we will see, the presence of the transition in the opposite direction will permit the algorithm to make a full search on the set $\mc Z$ and not to remain stacked in non optimal configurations. This is guaranteed by the following final result.

%
%
%
%
%

\begin{theorem} Let $\epsilon >0$. The following facts hold:
\begin{enumerate}
\item $Z^{\epsilon}_t$ is ergodic inside the set of states $\mc Z$;
\item $Z^{\epsilon}_t$ is time-reversible and its unique invariant probability is given by $\mu^\epsilon ({\bf x}) := K\epsilon^{ ||\x||_1}$ where $K>0$ is the normalization constant;
\item For $\epsilon\to 0+$, $\mu^{\epsilon}$ converges in law to a probability measure $\mu$ concentrated on the subset $\argmin_{{\bf x}\in\mc Z}||\x||_1$.
\end{enumerate}
\end{theorem}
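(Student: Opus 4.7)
My plan is to tackle the three assertions in order, since (2) and (3) essentially reduce to routine verifications once the right structural observation about transitions is in place.

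For part (1), the key observation I would establish first is that the one-step transitions of $Z^\epsilon_t$ are \emph{edge-symmetric}: whenever $\mathbf{x}$ and $\mathbf{x}'=(1-x_i,\mathbf{x}_{-i})$ differ in a single coordinate $i$, the transition $\mathbf{x}\to\mathbf{x}'$ has positive probability in $Z^\epsilon$ iff $\mathbf{x}'\to\mathbf{x}$ does, since the condition $n_{i,1}\ge n_{i,0}$ depends only on the values at neighbors of $i$ and is therefore unaffected by flipping $x_i$ itself. Using this, given any $\mathbf{x}\in\mc Z$, by definition there is a $Z^0$-path $\mathbf{1}=\mathbf{y}^0,\mathbf{y}^1,\ldots,\mathbf{y}^m=\mathbf{x}$ of decreasing monotone type; this same path is also a path for $Z^\epsilon$, and by edge-symmetry the reversed path $\mathbf{x}=\mathbf{y}^m,\mathbf{y}^{m-1},\ldots,\mathbf{y}^0=\mathbf{1}$ consists of allowed $Z^\epsilon$-transitions (each one being a $0\to 1$ flip occurring with probability $\epsilon/n>0$). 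Hence every state in $\mc Z$ communicates with $\mathbf{1}$ in $Z^\epsilon$, which gives irreducibility on $\mc Z$. Aperiodicity is immediate by exhibiting a positive self-loop probability at $\mathbf{1}$ (or at any non-absorbing state): picking any node $i$ in state $x_i=0$ with $n_{i,1}\ge n_{i,0}$ yields a $1-\epsilon$ probability of not flipping, so the chain remains at the current state with positive probability, unless $\mc V\setminus S_{\mathbf{x}}$ is empty — but in that case $\mathbf{x}=\mathbf{1}$ and again one can check a self-loop at a $0$-degree node or fall back on a two-step return argument.

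For part (2), the chain being reversible with respect to $\mu^\epsilon(\mathbf{x})=K\epsilon^{\|\mathbf{x}\|_1}$ reduces to detailed balance on neighboring pairs. For $\mathbf{x}$ and $\mathbf{x}'=(0,\mathbf{x}_{-i})$ with $x_i=1$ and $n_{i,1}(\mathbf{x})\ge n_{i,0}(\mathbf{x})$, the chain's rates are $P^\epsilon_{\mathbf{x},\mathbf{x}'}=1/n$ and $P^\epsilon_{\mathbf{x}',\mathbf{x}}=\epsilon/n$ (using the edge-symmetry of the activation condition from part (1)), and since $\|\mathbf{x}'\|_1=\|\mathbf{x}\|_1-1$, one checks directly that
\begin{equation*}
\mu^\epsilon(\mathbf{x})P^\epsilon_{\mathbf{x},\mathbf{x}'} = K\epsilon^{\|\mathbf{x}\|_1}\cdot \tfrac{1}{n} = K\epsilon^{\|\mathbf{x}\|_1-1}\cdot \tfrac{\epsilon}{n} = \mu^\epsilon(\mathbf{x}')P^\epsilon_{\mathbf{x}',\mathbf{x}}.
\end{equation*}
For non-adjacent $\mathbf{x},\mathbf{y}$ both transition rates vanish and balance is trivial, so $\mu^\epsilon$ is reversible. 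Uniqueness then follows from irreducibility established in (1), with $K=(\sum_{\mathbf{y}\in\mc Z}\epsilon^{\|\mathbf{y}\|_1})^{-1}$ the normalizing constant.

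For part (3), set $m=\min_{\mathbf{y}\in\mc Z}\|\mathbf{y}\|_1$ and $\mc Z^*=\argmin_{\mathbf{y}\in\mc Z}\|\mathbf{y}\|_1$. Then
\begin{equation*}
\mu^\epsilon(\mathbf{x}) = \frac{\epsilon^{\|\mathbf{x}\|_1-m}}{\sum_{\mathbf{y}\in\mc Z}\epsilon^{\|\mathbf{y}\|_1-m}},
\end{equation*}
and letting $\epsilon\to 0^+$ the numerator tends to $1$ if $\mathbf{x}\in\mc Z^*$ and to $0$ otherwise, while the denominator tends to $|\mc Z^*|$. Hence $\mu^\epsilon$ converges pointwise (and in law, since $\mc X$ is finite) to the uniform probability $\mu$ on $\mc Z^*$, which, recalling Theorem~\ref{theo:Z} and the optimization reformulation \eqref{optimization}, is supported exactly on the optimal sufficient control sets.

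The only step I expect to require care is the edge-symmetry argument underpinning irreducibility in (1); once it is in hand, everything else — detailed balance in (2) and the Laplace-type limit in (3) — is a direct calculation.
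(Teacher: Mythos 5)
Your argument for part (1) proves only half of what is needed: it shows that the states of $\mc Z$ all communicate with one another under $Z^{\epsilon}$ (via $\1$, using the edge-symmetry of the admissibility condition), but it never shows that the chain started in $\mc Z$ cannot \emph{leave} $\mc Z$. The delicate transitions are exactly the ones introduced by $\epsilon>0$: from $\x\in\mc Z$ with $x_i=0$ and $n_{i,1}(\x)\geq n_{i,0}(\x)$ the chain may jump to $\y=(1,\x_{-i})$, and nothing in your proposal shows $\y\in\mc Z$ — membership in $\mc Z$ means reachability from $\1$ under $Z^0$, which is not implied by the single reversed edge $\y\to\x$ being admissible. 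Without this closedness, ``ergodic inside $\mc Z$'' is not established, and part (2) also breaks: your detailed-balance computation with $K$ normalizing over $\mc Z$ gives invariance and uniqueness only if $\mc Z$ is a closed communicating class; if mass could leak to some $\y\notin\mc Z$ the pair $(\x,\y)$ would violate balance, and you cannot fall back on uniqueness for the chain on all of $\mc X$, since $Z^{\epsilon}$ is not irreducible there (for instance $\0$ is absorbing whenever every node has at least one neighbor).

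This missing step is in fact the main content of the paper's proof of (1). There, given $\x\in\mc Z$ and an admissible up-flip at $i$, one takes a $Z^0$-path from $\1$ to $\x$, deletes the step at which node $i$ was flipped down, and uses Lemma \ref{lemma:monotonicity} (the condition $n_{j,1}\geq n_{j,0}$ is monotone in the configuration) to check that all remaining down-flips, now performed in componentwise larger configurations, are still admissible; this produces a $Z^0$-path from $\1$ to $\y$, so $\y\in\mc Z$. Equivalently, you could invoke Theorem \ref{theo:Z}(1) together with Proposition \ref{prop:superset} and Theorem \ref{theo:valid-sufficient}: the support of $\y$ is a superset of the support of $\x$, hence again a sufficient control set, so $\1$-reachability holds and $\mc Z$ is upward closed. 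Either way the closedness of $\mc Z$ must be stated and proved before your irreducibility, detailed-balance and $\epsilon\to 0^+$ arguments (which otherwise match the paper's) can deliver the theorem as stated.
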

\begin{proof} 
(1): We first show that $\mc Z$ is invariant by $Z^{\epsilon}_t$. First notice that the transitions that decrease the number of $0$'s in  $Z^{\epsilon}$ are also possible in $Z^0$. We thus only have to check the invariance for the transitions that increase the number of $1$'s in $Z^{\epsilon}$. Fix $\bf x\in \mc Z$, suppose that $x_i=0$ and put $\bf y= (1,{\bf x}_{-i})$. If $P^{\epsilon}_{{\bf x},\bf y}>0$, it follows from 
(\ref{zepsilon}) that $n_{i,1}(\bf x)\geq n_{i,0}(\bf x)$. Since $\bf x\in \mc Z$, there exists a sequence of configuration vectors $\bf x^k$, for $k=0,\dots , m$ such that 
${\bf x}^0=\1$ and ${\bf x}^m = \1_{\mc C}$ satisfying the properties
\begin{enumerate}
\item ${\bf x}^{k}- {\bf x}^{k+1}=\delta_{i_k}$ for all $k=0,\dots , m-1$;
\item $\Phi_c({\bf x}^0)\geq\cdots \geq \Phi_c({\bf x}^T)$.
\end{enumerate}
There must exist $\bar k$ such that $i_{\bar k}=i$. Consider now $\bf y^h$ defined by
$$\bf y^h=\left\{\begin{array}{ll} \bf x^h\quad &{\rm for}\, h\leq \bar k\\ \bf x^{h+1}+\delta_{i} \quad &{\rm for}\, h> \bar k\end{array}\right.$$
Notice that ${\bf y}^{T-1}=\bf y$
Using again Lemma \ref{lemma:monotonicity}, we have that $\Phi_c({\bf y}^0)\geq\cdots \geq \Phi_c({\bf y}^{T-1})$ so that $\bf y\in\mc Z$.

Ergodicity follows now from the observation that $Z^\epsilon$ verify the following property: for all transition with non zero probability, also the reversed transition has non zero probability. If $\bf x,\bf y\in\mc Z$, by definition of $\mc Z$, they are both reachable by $Z^0$ (and thus also by $Z^{\epsilon}$) starting from $\1$. Therefore, with positive probability, under the MC $Z^{\epsilon}$ it is possible to move from $\bf x$ to $\1$ and then to $\bf y$. 

(2) We need to show that for all $\bf x,\, \bf y\in\mc Z$, it holds $$ \epsilon^{ ||\x||_1}P^{\epsilon}_{{\bf x},\bf y}=\epsilon^{ ||\y||_1}P^{\epsilon}_{{\bf y},\bf x}$$
Since the only non-trivial transitions of $Z^{\epsilon}$ are those in (\ref{zepsilon}), we can reduce to check it in the case when, for some $i\in\mc V$, $x_i=1$ and $\bf y=(0, \bf x_{-i})$. In this case we obtain
that $$ \epsilon^{ ||\x||_1}P^{\epsilon}_{{\bf x},\bf y}= \frac{\epsilon^{ ||\x\|_1}}{n},\quad \epsilon^{ ||\y||_1}P^{\epsilon}_{{\bf y},\bf x}= \epsilon^{ ||\x||_1-1}\frac{\epsilon}{n}=\frac{\epsilon^{ ||\x||_1}}{n}$$
This proves the claim.

(3) follows from the expression of $\mu^{\epsilon}$.

%
%
%
%
%
%
%
\qed
\end{proof}

By virtue of the reformulation (\ref{optimization}), we have that, for small $\epsilon$ and sufficiently large $t$, the MC $Z^{\epsilon}_t$ will be 'most of the time' in configurations whose support are optimal sufficient control sets. Thsi observation is at the base of a practical algorithm described in the next section.

%
%
%
%
%
%
%

\section{Simulations}\label{sec:simulations}

We have implemented an iterative algorithm based on the MC $Z^{\epsilon}_t$. For the sake of increasing the speed to convergence, we actually considered a modification of $Z^{\epsilon}_t$ with all trivial self-loop transitions removed. This induces a little modification in the invariant probability, but do not affect the minimal set that is the return of the algorithm. Details will be reported elsewhere. The algorithm keeps track of the best configuration (the smallest $||x||_1$) found so far.

We have applied the algorithm to random realizations of Erd\"os-R\'eniy graphs with different number of nodes $n$ and probability $p=1/2$. For every value of $n$, we ran $500$ executions on $20$ randomly generated graphs. The algorithm is stopped after $100n$ iterations, using for epsilon the constant value {$\epsilon = 0.2 $.} 

%

As a point of comparison, we computed a benchmark optimum consisting of the exhaustive optimum for small graphs, and a much longer execution on bigger graphs. Figure \ref{fig} shows the average values of the size of the sufficient control sets computed by the algorithm. We compare it with the benchmark optimum and also with the result obtained looking at the very last step of the algorithm. This plot shows a remarkable performance of the algorithm that in linear time gets quite close to the optimum. 
It also shows that the MC, though fluctuating, as $\epsilon >0$, still remains close to the optimal configurations. The important question of how to tune the parameter $\epsilon$ for optimize performance has not been addressed here.

Finally notice how optimal sufficient control sets are scaling linearly with respect to the size of the graph. This suggests that Erd\"os-R\'eniy graphs are somewhat 'difficult' to control in this sense, in other terms they show resilience to this type of external actions.

\begin{figure}
\includegraphics[scale=.7]{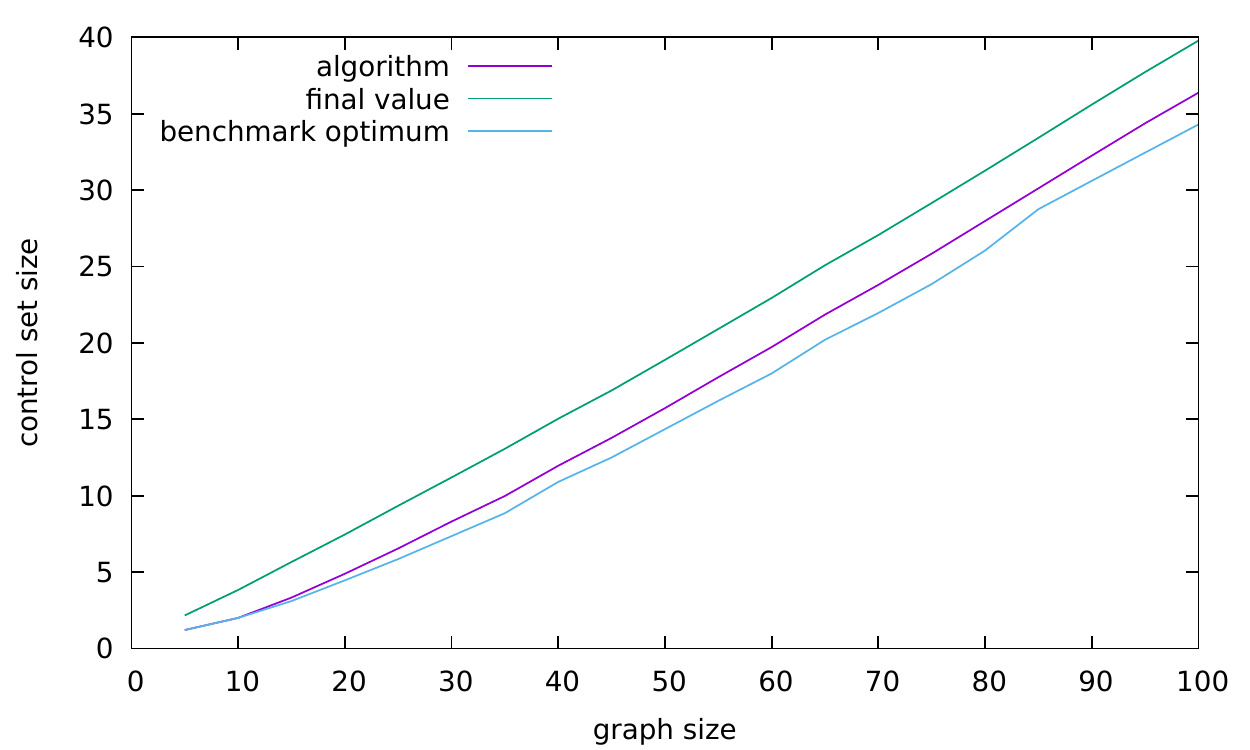}
\caption{In blue, the benchmark optimum. In purple, the minimal control set encountered in $100n$ steps. In green, the size of the control set at the end. 
}
\label{fig}
\end{figure}

%
%
%

\section{Conclusion}\label{sec:conclusions}

We have formulated the problem of finding, in a network coordination game, the minimum number of players to be controlled in order to drive the system from one Nash equilibrium to another one. To the scope, we have designed a low complexity randomized algorithm and proven its convergence properties. We have finally carried on some numerical simulations corroborating the results.

Many challenging issues naturally pop up from our analysis and simulations. Erd\"os-R\'eniy graphs have exhibited optimal sufficient control sets growing linearly in the size of the graph. It would be of interest if this could be proven analytically, as well if this could be extended to other family of graphs, connecting such resilient phenomena to topological properties. 

The problem studied in this paper is an instance of a more general problem of studying the effect of control actions in evolutionary game theory. Future research will be in this direction analyzing similar control problems for general games with strategic complementarity as well strategic substitutes. 

%
%
%
%
%

\bibliography{bib}

\end{document}